\documentclass[12pt, reqno]{amsart}
\usepackage{amsmath, amsthm, amscd, amsfonts, amssymb, graphicx, color}
\usepackage{graphicx,color}
\usepackage{setspace,kantlipsum}
\usepackage{amsfonts}
\usepackage{amssymb}
\usepackage{mathrsfs}
\usepackage{amsmath}
\usepackage{amsthm}
\usepackage{latexsym}
\usepackage{graphicx}
\usepackage{xypic}
\usepackage{hyperref}
\usepackage{lineno}
\usepackage{tikz-cd} 
\usepackage{subcaption} 
\usepackage[bottom]{footmisc}

\makeatletter
\@namedef{subjclassname@2020}{%
 \textup{2020} Mathematics Subject Classification}
\makeatother

\usepackage{doc}
\makeatletter

\numberwithin{equation}{section}

\theoremstyle{definition}
\newtheorem{theorem}{Theorem}[section]

\newtheorem{proposition}[theorem]{Proposition}

\newtheorem{remark}[theorem]{Remark}

\def\Tr{{\rm Tr\,}}

\def\C{\mathbb C}

\def\diag{{\mbox{diag}\,}}

\begin{document}
 
\title[Spectral geometric mean and trace characterizations]{Spectral geometric mean and trace characterizations}

\author{Airat~Bikchentaev}
\address{Airat~Bikchentaev}
\address{Lobachevskii Institute of Mathematics and Mechanics, Kazan Federal University, Kazan, 420008, Russia}
\email{Airat.Bikchentaev@kpfu.ru}

\author{Trung~Hoa~Dinh}
\address{Trung~Hoa~Dinh}
\address{Department of Mathematics and Statistics, Troy University, Troy, AL 36082, USA}
\email{thdinh@troy.edu}

\author{Anh~Vu~Le}
\address{Anh~Vu~Le$^{1,2}$}
\address{1. University of Economics and Law, Ho Chi Minh City, Vietnam} 
\address{2. Vietnam National University, Ho Chi Minh City, Vietnam}
\email{vula@uel.edu.vn}

\author{Mohammad~Sal~Moslehian} 
\address{Mohammad~Sal~Moslehian}
\address{Department of Pure Mathematics, Faculty of Mathematical Sciences, Ferdowsi University of Mashhad, P. O. Box 1159, Mashhad 91775, Iran.}
 \email{moslehian@um.ac.ir; moslehian@yahoo.com}

\subjclass[2020]{15A15, 15A60, 81P45}
\keywords{Trace inequality, spectral geometric mean, quantum fidelity, pure states, maximally mixed state}

\begin{abstract}
	We use nearly parallel pure states to characterize positive linear functionals $\phi$ on $\mathbb{M}_n$ as positive multiples of the trace if and only if $\phi(A \natural B) \leq \sqrt{\phi(A) \phi(B)}$ for all positive definite matrices $A$ and $B$. Here $A \natural B = (A^{-1} \# B)^{1/2} A (A^{-1} \# B)^{1/2}$ represents the spectral geometric mean. For further clarification, we establish novel characterizations through the inequality $\phi(A \natural B) \leq \phi((A+B)/2)$ for all positive definite matrices $A$ and $B$. We also present a trace inequality related to quantum fidelity that applies to all positive definite matrices, and demonstrate that it does not characterize the trace. 
\end{abstract}

\maketitle 
\section{Introduction}
Let $\mathbb{M}_n$ denote the full matrix algebra of all $n \times n$ complex matrices equipped with the L\"{o}wner order $\ge$, and let $\mathbb{P}_n$ stand for the cone of positive definite matrices in $\mathbb{M}_n$. Tracial functionals are essential in matrix analysis, operator algebras, and quantum information. On a unital operator algebra $\mathscr{A}$, a tracial state is a normalized (meaning $\varphi(I) = 1$) positive linear functional $\varphi: \mathscr{A} \to \mathbb{C}$ satisfying $\varphi(AB) = \varphi(BA)$ for all $ A, B \in \mathscr{A}$. When $ \mathscr{A}=\mathbb{M}_n$, it corresponds to the normalized trace $\tau(A) = \frac{1}{n} \Tr(A)$. It is known that every positive linear functional $\varphi(X)$ on $\mathbb{M}_n$ can be expressed as $\varphi(X)=\Tr(SX)$ for some unique $S\in \mathbb{P}_n$.

Characterizing this unique $\tau$ clarifies matrix symmetries and quantum states. The trace $\tau(A)$ corresponds to the expectation under the maximally mixed state $\rho_* = \frac{I}{n}$, where von Neumann entropy $ S(\rho) = -\Tr(\rho \log \rho) $ peaks, reflecting maximal disorder and unitary invariance. Thus, characterizing $\tau$ also identifies $\rho_*$, bridging matrix theory and quantum symmetry.

It is shown in \cite{BM2} that $\varphi((A^{1/2} B A^{1/2})^{1/2}) \leq \frac{\varphi(A) + \varphi(B)}{2}$ for $A, B \ge 0$ characterizes the trace functional, up to a scalar. This connects to quantum divergences via $\mathrm{d}_b(A,B) = \Tr((A+B)/2) - \Tr((A^{1/2} B A^{1/2})^{1/2})$,which is half the squared Bures--Wasserstein distance; see \cite{BhatiaJainLim2019, spehner1}.

The homogeneous function $\mathcal{H}(A,B) = (A+B)/2 - (A^{1/2} B A^{1/2})^{1/2}$ generalizes the scalar gap $ h(a,b) = (a+b)/2 - \sqrt{ab} $, inspiring divergences like Hellinger and Bhattacharyya types. Recently, Jeong, Kim, and Tam \cite{Jeong2024} used the spectral geometric mean $ A \natural B $ to define a new quantum divergence.   Related constructions of quantum divergences arising from trace inequalities,
monotonicity inequalities, and operator convexity were studied in
\cite{DinhDuNguyenVuong2023,Dinh2013b,DinhLeOsakaPhan2025,DinhTikhonov2010b}. 

In this paper, we introduce a novel rank-one projection test, using orthogonal projections (quantum pure states) to  prove  tracial properties. We prove that either $\varphi(A\natural B) \leq \sqrt{\varphi(A)\varphi(B)}$ or $\varphi(A\natural B) \leq \varphi((A+B)/2)$ characterizes $\varphi = c \Tr$. In addition, a fidelity-based inequality holds but fails to characterize $\Tr$, as shown by counterexamples. 

\section{Preliminaries}

Let $A\# B := A^{1/2}\!\big(A^{-1/2}BA^{-1/2}\big)^{1/2}\!A^{1/2}$ be the (Kubo--Ando) metric geometric mean \cite{KuboAndo1980}, as mentioned in the monograph \cite[Section 3.2]{MO}.
Following Fiedler--Pt\'ak \cite{FiedlerPtak1997}, the \emph{spectral geometric mean} of $A,B\in\mathbb{P}_n$ is defined as:
\begin{equation}\label{eq:sgm-def}
	A \,\natural\, B \;:=\; \big(A^{-1}\# B\big)^{1/2}\; A\; \big(A^{-1}\# B\big)^{1/2}\ \in \mathbb{P}_n.
\end{equation}
For two matrices $A, B \in \mathbb{P}_n$, the metric and spectral geometric means are equal if and only if $A$ and $B$ commute \cite[Theorem 5.2]{FiedlerPtak1997}.

A matrix-valued map $F: \mathbb{P}_n \times \mathbb{P}_n \to \mathbb{P}_n$ is \emph{positively homogeneous of degree 1} if for any $X, Y \in \mathbb{P}_n$ and $t > 0,$ 
\[
F(tX, tY) = t F(X, Y).
\]
If $A$ and $G$ are such maps with $A(X, Y) \ge G(X, Y)$ for all $X, Y$, then for every positive linear functional $\varphi$ (in particular, the trace),
\[
\varphi(A(X, Y)) \geq \varphi(G(X, Y)),
\]
and the difference $A - G$ inherits homogeneity: $(A - G)(tX, tY) = t (A - G)(X, Y)$.
This observation underpins many ``$A \geq G$'' matrix inequalities: homogeneity ensures scale invariance, while the Loewner order $A \ge G$ yields a scalar inequality under $\Tr$.

The following result is well-known in literature. 
\begin{proposition}\label{pro1} 
	Let $\varphi: \mathbb{M}_n \to\mathbb{C}$ be a linear functional. The following are equivalent:
	\begin{enumerate}\itemsep.2em
		\item[(i)] $\varphi$ is tracial, that is,\ $\varphi(AB)=\varphi(BA)$ for all $A,B \in \mathbb{M}_n$;
		\item[(ii)] $\varphi$ is unitarily invariant: $\varphi(U^*AU)=\varphi(A)$ for all unitary $U\in \mathbb{M}_n$ and all $A\in \mathbb{M}_n$;
		\item[(iii)] $\varphi$ takes the same value on every rank-one projection.
	\end{enumerate}
	In this case $\varphi=\alpha\,\Tr$ for some $\alpha\in\mathbb{C}$, and for any projection $P$ of rank $k$,
	\[
	\varphi(P)=\alpha\,\Tr(P)=\frac{k}{n}\,\varphi(I).
	\]
	If, in addition, $\varphi$ is positive, then $\alpha=\varphi(I)/n\geq 0$.
\end{proposition}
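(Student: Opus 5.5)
I will prove the cycle (i)$\Rightarrow$(ii)$\Rightarrow$(iii)$\Rightarrow$(i) and obtain the formula $\varphi=\alpha\Tr$ along the way.

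For (i)$\Rightarrow$(ii), apply traciality to the pair $U^*A$ and $U$: this gives $\varphi(U^*AU)=\varphi(UU^*A)=\varphi(A)$.

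For (ii)$\Rightarrow$(iii), any two rank-one projections $P=xx^*$ and $Q=yy^*$ with unit vectors $x,y$ are unitarily conjugate. Choose a unitary $U$ with $Uy=x$; then $U^*PU=Q$. Unitary invariance then yields $\varphi(P)=\varphi(Q)$.

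The substantive direction is (iii)$\Rightarrow$(i), and I plan to show directly that $\varphi=\alpha\Tr$. Write $\varphi(X)=\Tr(SX)$ for the unique $S\in\mathbb{M}_n$ (every linear functional has this form via the trace pairing). Hypothesis (iii) says that $\langle Sx,x\rangle=\Tr(S\,xx^*)=c$ for every unit vector $x$. Hence the quadratic form $x\mapsto\langle (S-cI)x,x\rangle$ vanishes on all of $\C^n$. By polarization over $\C$, $\langle (S-cI)x,y\rangle=0$ for all $x,y$, so $S=cI$. Thus $\varphi=c\Tr$, which is tracial, and we may take $\alpha=c$.

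An alternative route to (iii)$\Rightarrow$(i), which avoids the representing matrix, goes through matrix units. Using projections onto $e_j$, onto $(e_j+e_k)/\sqrt2$, and onto $(e_j+ie_k)/\sqrt2$, one checks that $\varphi(E_{jk})=0$ for $j\ne k$ and that $\varphi(E_{jj})$ is constant. Either way, the only real step is the polarization identity, and it works precisely because we are over $\C$. I expect no genuine obstacle; the care needed is in this step.

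Finally, for a projection $P$ of rank $k$ we get $\varphi(P)=\alpha k$. Evaluating at $P=I$ gives $\varphi(I)=\alpha n$, so $\varphi(P)=\frac kn\varphi(I)$. If $\varphi$ is positive, then $\varphi(I)\ge0$, hence $\alpha=\varphi(I)/n\ge0$.
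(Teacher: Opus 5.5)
Your proof is correct. The cycle (i)$\Rightarrow$(ii)$\Rightarrow$(iii)$\Rightarrow$(i) is complete. The substantive step is sound: you write $\varphi(X)=\Tr(SX)$, read (iii) as $x^*(S-cI)x=0$ for all $x\in\C^n$ (rescaling from unit vectors), and apply complex polarization, which is valid even though $S$ need not be Hermitian. This yields $\varphi=c\,\Tr$ with $c\in\C$, and the formula $\varphi(P)=\frac kn\varphi(I)$ together with $\alpha\ge 0$ for positive $\varphi$ follow immediately.

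The paper gives no proof of this proposition; it is quoted as well known. So there is no competing argument to compare against. Your representing-matrix argument uses the same device, $\varphi(X)=\Tr(SX)$ tested on rank-one projections, that the paper relies on in its later proofs. Your matrix-unit variant is the same polarization written out in coordinates.
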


\section{Spectral geometric mean and trace characterizations}
\subsection{Rank-one perturbation witness}

Let us start with the inequality:
\begin{equation}\label{eq:target}
	\varphi\Big(\, (A^{1/2} B A^{1/2})^{1/2}\,\Big)
	\leq \frac{\varphi(A)+\varphi(B)}{2}.
\end{equation}
The main goal is to find a counterexample showing the violation of this inequality if 
the functional $\varphi$ is not tracial. When $A$ and $B$ are rank-one 
(or close to rank-one), the algebra simplifies significantly. This naturally leads us to consider using
rank-one projections, which in the context of quantum information theory correspond to 
pure states, as test objects. The strategy involves selecting nearly parallel pure states $u$ and $v$, where the ``transition probability'' $|\langle u,v\rangle|^2$ is close to $1$ \cite{Uhlmann1976}. 
Subsequently, we test the inequality using rank-one matrices
\[
A_0=\lambda\,|u\rangle \langle u|,\qquad B_0=\lambda\,|v\rangle \langle v|\qquad(\lambda>0).
\]
If the inequality fails in this simple setting, it will also fail for nearby choices 
of $A$ and $B$ in higher dimensions. In fact, for general $n\times n$ matrices, one can take
\[
A=\varepsilon I+\lambda |u\rangle \langle u|,\qquad B=\varepsilon I+\lambda |v\rangle \langle v|,
\]
with $\varepsilon>0$ being small and $\lambda$ being large, so that $A$ and $B$ are positive definite and 
close to the rank-one test case $(A_0,B_0)$. Such matrices play a special role in quantum information theory when $\varepsilon + \lambda =1$: they represent noisy pure states (mixed states obtained by adding depolarizing noise to a rank-one projection). All values in the inequality vary continuously based on $(\varepsilon,\lambda)$, maintaining the same sign pattern as in the rank-one case. 
This indicates that testing on rank-one projections is adequate for detecting violations of the inequality.

Choosing $u$ and $v$ to be nearly parallel mimics tasks in quantum hypothesis testing \cite{Audenaert2007}, where one seeks to distinguish between two quantum states that have high overlap. The transition probability $|\langle u,v\rangle|^2$ measures how likely it is that one pure state is confused for another, while the functional values $\varphi(A)$ and $\varphi(B)$ play the role of expected outcomes under a measurement \cite{Hayashi2017}. In this sense, if the functional $\varphi$ is not tracial, then for carefully chosen nearly parallel pure states $u$ and $v$ the inequality fails. Just as in quantum information, a nontrivial measurement can detect distinguishability of close states. This provides a bridge between operator algebraic characterizations of the trace and operational ideas of fidelity and state discrimination in quantum information theory \cite{Jozsa1994}.

Now, let's return to inequality (\ref{eq:target}). We choose unit vectors
\[
u=(\cos\theta,\sin\theta)^\top,\qquad v=(\cos(\theta+\Delta),\sin(\theta+\Delta))^\top
\]
with $|\Delta|\ll1$ so that $|\langle u,v\rangle|=\cos\Delta\approx1-\tfrac12\Delta^2$ is close to $1$.
Let
\[
A_0=\lambda\,|u\rangle \langle u|,\qquad B_0=\lambda\,|v\rangle \langle v|\qquad(\lambda>0).
\]
Then $A_0^{1/2}=\sqrt{\lambda}\,|u\rangle \langle u|$ and
\[
A_0^{1/2}B_0A_0^{1/2}
=\lambda^2(|u\rangle \langle u|)(|v\rangle \langle v|)(|u\rangle \langle u|)
=\lambda^2|\langle u,v\rangle|^2\,|u\rangle \langle u|,
\]
whence
\[
\big(A_0^{1/2}B_0A_0^{1/2}\big)^{1/2}
=\lambda\,|\langle u,v\rangle|\,|u\rangle \langle u|.
\]
Considering $\varphi(X)=\operatorname{Tr}(SX)$ with $S\ge0$, we obtain
\[
\varphi \Big(\big(A_0^{1/2}B_0A_0^{1/2}\big)^{1/2}\Big)
=\lambda\,|\langle u,v\rangle|\,\langle u,Su\rangle,~~
\varphi(A_0)=\lambda\,\langle u,Su\rangle,~~ \mbox{and}~~
\varphi(B_0)=\lambda\,\langle v,Sv\rangle.
\]

 By cancelling the common factor $\lambda$ in (\ref{eq:target}), we see that
(\ref{eq:target}) for $(A_0,B_0)$ is equivalent to
\begin{equation}\label{eq:F-again}
	F(\theta,\Delta):=
	2\,|\langle u,v\rangle|\,\langle u,Su\rangle
	-\langle u,Su\rangle-\langle v,Sv\rangle \; \le \; 0.
\end{equation}
Thus, a violation of (\ref{eq:target}) is equivalent to $F(\theta,\Delta)>0$.

Since (\ref{eq:target}) is homogeneous in $\varphi$, we may normalize $S$ so that
its smaller eigenvalue equals $1$. In a suitable orthonormal basis,
we write
\[
S=\operatorname{diag}(s,1),\qquad s>1 .
\]
Then
\[
\langle u,Su\rangle=1+(s-1)\cos^2\theta,\qquad
\langle v,Sv\rangle=1+(s-1)\cos^2(\theta+\Delta),
\]
and $|\langle u,v\rangle|=\cos\Delta$ for $|\Delta|$ small. Using
\[
\cos\Delta=1-\frac12\Delta^2+O(\Delta^4),
\]
and
\[
\cos^2(\theta+\Delta)
=
\cos^2\theta-\sin(2\theta)\Delta-\cos(2\theta)\Delta^2+O(\Delta^3),
\]
we obtain
\[
F(\theta,\Delta)
=
(s-1)\sin(2\theta)\Delta
-
\bigl(1+(s-1)\sin^2\theta\bigr)\Delta^2
+O(\Delta^3).
\]
If $\theta\in(0,\pi/2)$ and $\Delta>0$ is sufficiently small, then
the linear term is positive and dominates the higher-order terms.
Hence $F(\theta,\Delta)>0$, so (\ref{eq:target}) fails.

Equally, one may take $\theta\in(\pi/2,\pi)$ and $\Delta<0$ sufficiently
small. This gives the same conclusion. 
  
   To obtain positive definite test matrices, we regularize the rank-one
construction. Namely, embed the above two-dimensional construction into
$\mathbb C^n$ and set
\[
A_\varepsilon=\varepsilon I+\lambda |u\rangle \langle u|,\qquad
B_\varepsilon=\varepsilon I+\lambda |v\rangle \langle v|,
\qquad \varepsilon>0.
\]
The parameter $\varepsilon$ is introduced only to make the rank-one matrices
strictly positive, so that $A_\varepsilon,B_\varepsilon\in\mathbb P_n$.
Since the violation obtained above at $(A_0,B_0)$ is strict and the map
\[
(A,B)\mapsto
\varphi\!\left((A^{1/2}BA^{1/2})^{1/2}\right)
-\frac{\varphi(A)+\varphi(B)}2
\]
is continuous on the positive semidefinite cone, the strict violation
persists for all sufficiently small $\varepsilon>0$. Hence there exist
$A,B\in\mathbb P_n$ for which \eqref{eq:target} fails. 

There are numerous characterizations of trace functionals on matrices and operator algebras; see the survey article \cite{BM3} as well as \cite{BM1} and references therein. We have the following characterization that was recently obtained in \cite{BM2}. Here, we reformulate it in a slightly different form.
\begin{proposition} 
	Let $\varphi(X)=\Tr(SX)$ be a positive linear functional on $\mathbb{M}_n$ with
	$S\in \mathbb{P}_n$ not being a scalar multiple of the identity. Then, there exist
	$A,B\in \mathbb{P}_n$ such that
	\begin{equation*}\label{eq:target}
		\varphi\Big(\, (A^{1/2} B A^{1/2})^{1/2}\,\Big)
		\;>\; \frac{\varphi(A)+\varphi(B)}{2}.
	\end{equation*}
\end{proposition}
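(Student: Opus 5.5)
The plan is to reduce to a two-dimensional computation with nearly parallel pure states, as in the rank-one perturbation witness above, and then regularize to get positive definite matrices. Since $S\in\mathbb{P}_n$ is not a scalar multiple of the identity, it has two eigenvalues $s_1>s_2>0$ with orthonormal eigenvectors $e_1,e_2$. The inequality is homogeneous in $\varphi$, so after rescaling we may assume $s_2=1$ and $s_1=s>1$. We work in the plane spanned by $e_1,e_2$, where the compression of $S$ is $\operatorname{diag}(s,1)$. We choose unit vectors $u=\cos\theta\, e_1+\sin\theta\, e_2$ and $v=\cos(\theta+\Delta)e_1+\sin(\theta+\Delta)e_2$, with $\theta\in(0,\pi/2)$ fixed (say $\theta=\pi/4$) and $\Delta>0$ small.

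The first step is the rank-one computation. For $A_0=\lambda|u\rangle\langle u|$ and $B_0=\lambda|v\rangle\langle v|$ we have $(A_0^{1/2}B_0A_0^{1/2})^{1/2}=\lambda|\langle u,v\rangle|\,|u\rangle\langle u|$. Hence the desired strict reverse inequality at $(A_0,B_0)$ is equivalent to $F(\theta,\Delta)>0$, with $F$ as in \eqref{eq:F-again}. The only entries of $S$ that enter are $\langle u,Su\rangle$ and $\langle v,Sv\rangle$, and these are computed from the diagonal $2\times 2$ block because $u,v\in\operatorname{span}\{e_1,e_2\}$ and $e_1,e_2$ are eigenvectors. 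Expanding in $\Delta$ gives
\[
F(\theta,\Delta)=(s-1)\sin(2\theta)\,\Delta-\bigl(1+(s-1)\sin^2\theta\bigr)\Delta^2+O(\Delta^3).
\]
The linear coefficient is strictly positive, so $F(\theta,\Delta)>0$ for small $\Delta>0$. To avoid relying on the asymptotic alone, one can check directly that $F(\pi/4,\Delta)>0$ for a concrete small $\Delta$, which the expansion guarantees exists.

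The second step is regularization. We set $A_\varepsilon=\varepsilon I+\lambda|u\rangle\langle u|$ and $B_\varepsilon=\varepsilon I+\lambda|v\rangle\langle v|$, both of which lie in $\mathbb{P}_n$. We then use that
\[
G(A,B)=\varphi\bigl((A^{1/2}BA^{1/2})^{1/2}\bigr)-\tfrac12\bigl(\varphi(A)+\varphi(B)\bigr)
\]
is continuous on pairs of positive semidefinite matrices. This holds because $X\mapsto X^{1/2}$ is norm-continuous on the positive semidefinite cone, and products and linear functionals are continuous. Since $G(A_0,B_0)=\lambda F/1>0$, and $A_\varepsilon\to A_0$, $B_\varepsilon\to B_0$ as $\varepsilon\to0$, we get $G(A_\varepsilon,B_\varepsilon)>0$ for small $\varepsilon$. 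This gives the required $A,B$.

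The main obstacle is the justification of continuity of the square root up to the boundary of the cone, where $A_0$ and $B_0$ are singular. I would settle it by citing the standard fact that $t\mapsto t^{1/2}$ is continuous on $[0,\infty)$, hence operator-continuous on bounded sets of positive semidefinite matrices; for instance, it can be uniformly approximated by polynomials on a compact interval containing all the spectra. The rest is the routine Taylor expansion already displayed.
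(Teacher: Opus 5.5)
Your proposal is correct and follows essentially the same route as the paper. You use the rank-one test with nearly parallel pure states $u,v$ in the plane of two distinct eigenvectors of $S$, the expansion $F(\theta,\Delta)=(s-1)\sin(2\theta)\Delta-\bigl(1+(s-1)\sin^2\theta\bigr)\Delta^2+O(\Delta^3)>0$ for $\theta\in(0,\pi/2)$ and small $\Delta>0$, and then regularization by $\varepsilon I$ via continuity of the square root on the positive semidefinite cone. Your explicit choice of eigenvectors $e_1,e_2$ (instead of the paper's loose ``$S=\operatorname{diag}(s,1)$'') and your justification of continuity are welcome additions; the only slip is that $G(A_0,B_0)=\lambda F/2$, not $\lambda F/1$, which does not affect positivity.
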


\subsection{The spectral geometric mean and characterization of trace}

Recall from \cite[Proposition 3.2.9]{BKMS} that $X=A^{-1}\#B$ if and only if $B=XAX$. With such a matrix $X$ we have:
\begin{equation}\label{eq:spec-def}
	A\natural B \;=\; X^{1/2} A X^{1/2}.
\end{equation}

By Cauchy--Schwarz's inequality in the Hilbert--Schmidt inner product, we can see that
\begin{align*}
	\Tr(A\natural B)=\Tr(X^{1/2}AX^{1/2})
	=\langle A^{1/2},\,X A^{1/2}\rangle_{HS}
	&\leq \|A^{1/2}\|_{HS}\,\|X A^{1/2}\|_{HS} \\
	&=\sqrt{\Tr A\;\Tr(XAX)} =\sqrt{\Tr A\;\Tr B}.
\end{align*}

\begin{theorem} 
	Let $\varphi(X)=\Tr(SX)$ be a positive linear functional on $\mathbb{M}_n$ with $S\in \mathbb{P}_n$.
	Assume that for all $A,B\in \mathbb{P}_n$,
	\begin{equation}\label{eq:specGM}
		\varphi(A \natural B)\ \le\ \sqrt{\varphi(A)\,\varphi(B)} .
	\end{equation}
	Then, $S$ is a scalar multiple of the identity, that is,\ $\varphi=c\,\Tr$ with $c\ge0$.
\end{theorem}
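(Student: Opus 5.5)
The plan is to argue by contraposition: assume $S$ is not a scalar multiple of $I$ and construct $A,B\in\mathbb P_n$ violating \eqref{eq:specGM}. The key simplification comes from \eqref{eq:spec-def} and the cited fact that $X=A^{-1}\#B$ iff $B=XAX$. Instead of prescribing $A$ and $B$, I will prescribe an arbitrary $A\in\mathbb P_n$ and an arbitrary $X=Y^2\in\mathbb P_n$, and put $B:=XAX\in\mathbb P_n$. Then $A\natural B=YAY$, and \eqref{eq:specGM} becomes the purely algebraic inequality
\[
\Tr(S\,YAY)^2\ \le\ \Tr(SA)\,\Tr(S\,Y^2AY^2).
\]
Every quantity here is polynomial in $A$ and $Y$, so no continuity of $\natural$ at the boundary is needed. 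I can therefore test with $A=|u\rangle\langle u|$ and afterwards replace it by $A_\varepsilon=\varepsilon I+|u\rangle\langle u|$. A strict violation at $\varepsilon=0$ persists for small $\varepsilon>0$, exactly as in the rank-one perturbation argument of the previous subsection.

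With $A=|u\rangle\langle u|$ set $g(t)=\langle Y^tu,\,SY^tu\rangle$. The condition reads $g(1)^2\le g(0)g(2)$, a midpoint log-convexity condition. If $Y$ commutes with $S$, then $g$ is a positive combination of exponentials, so the condition always holds. The violation must therefore come from a $Y$ whose eigenbasis is skew to that of $S$. I will work in a two-dimensional subspace spanned by eigenvectors of $S$ with eigenvalues $s\neq 1$ (after normalizing as before, $S=\diag(s,1)$ there, $s>1$), and take $Y$ with orthonormal eigenvectors $e_1,e_2$ at angle $\pi/4$ to them and eigenvalues $1,y$.

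In the basis $e_1,e_2$ one has $S=\begin{pmatrix} a & -m\\ -m & a\end{pmatrix}$ with $a=(s+1)/2$ and $m=(s-1)/2$; a phase choice of $e_2$ makes the sign of $m$ negative. Take $u=(\alpha,\beta)$ real. Then
\[
g(t)=a\alpha^2+a\beta^2r^2-2m\alpha\beta r,\qquad r=y^t .
\]
After dividing by $a\beta^2$ and choosing $\alpha/\beta=a/m$, this becomes $(r-1)^2+\delta$ with $\delta=a^2/m^2-1$. Hence
\[
g(0)\propto\delta,\qquad g(1)\propto(y-1)^2+\delta,\qquad g(2)\propto(y^2-1)^2+\delta .
\]
The idea is to make $g(0)$ small relative to the others, since $u$ sits near the "valley" of the quadratic form.

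The main obstacle is the arithmetic of this choice. As $y\to\infty$ the violation $g(1)^2>g(0)g(2)$ reduces to $1>\delta$, i.e.\ $(s+1)^2<2(s-1)^2$. This holds only for $s$ large, not for $s$ close to $1$. I expect to fix this by not insisting on the $\pi/4$ angle or on $y\to\infty$. Instead I will treat the angle of the $Y$-eigenbasis, the ratio $\alpha/\beta$, and $y$ as free parameters, and do a small-parameter expansion in the spirit of \eqref{eq:F-again}. Concretely, take $Y=I+tH$ with $H$ Hermitian and not commuting with $S$, and expand $g(0)g(2)-g(1)^2$ in $t$. The first-order terms cancel identically. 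The second-order coefficient is a quadratic expression in $u$ and $H$ involving $\langle Hu,SHu\rangle$, $\operatorname{Re}\langle H^2u,Su\rangle$ and $(\operatorname{Re}\langle Hu,Su\rangle)^2$. The goal is to show it can be made negative whenever $S$ is not scalar, e.g.\ by choosing $u$ so that $Hu$ points toward the small eigenvalue of $S$ while $H^2u$ has a large component along $u$.

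If that expansion turns out to be sign-definite, the fallback is to iterate. Since $A\natural B$ is built from composing scalings, one can replace $S$ by the effective form $Y^{k}SY^{k}$ to amplify the eigenvalue ratio until it exceeds the threshold $s>3+2\sqrt2$ found above. Once a violation exists in the two-dimensional block, embedding into $\mathbb C^n$ (with $Y$ acting as the identity on the complement) and regularizing $A$ by $\varepsilon I$ completes the proof.
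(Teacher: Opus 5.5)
Your reduction is correct and matches the paper's: with $X=Y^2$ and $B=XAX$ the hypothesis becomes $\Tr(SYAY)^2\le\Tr(SA)\,\Tr(SY^2AY^2)$, and the rank-one test $A=|u\rangle\langle u|$ gives $g(1)^2\le g(0)g(2)$. Parametrizing by $Y$ even spares you the paper's $O(\varepsilon^{1/2})$ estimates. The gap is that the decisive step is never carried out. Your only explicit construction violates the inequality only for $s>3+2\sqrt2$, and for $1<s\le 3+2\sqrt2$ you offer plans, not arguments.

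The expansion about $Y=I$ is left open, though it could work. For $Y=I+\tau H$ the $\tau^2$-coefficient of $g(0)g(2)-g(1)^2$ is $2\bigl(g(0)(\langle Hu,SHu\rangle+\operatorname{Re}\langle H^2u,Su\rangle)-2(\operatorname{Re}\langle Hu,Su\rangle)^2\bigr)$. This is negative, e.g., for $H=|w\rangle\langle w|$ with $w$ nearly orthogonal to a real non-eigenvector $u$ and $\langle w,u\rangle$, $\langle w,Su\rangle$ of opposite signs. Your heuristic of giving $H^2u$ a large component along $u$ instead increases $\operatorname{Re}\langle H^2u,Su\rangle$ and pushes the coefficient the wrong way.

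The fallback is empty. Since $Y^k$ commutes with $Y$ and $A\mapsto Y^kAY^k$ is a bijection of $\mathbb P_n$, a given $Y$ yields a violation for $Y^kSY^k$ exactly when it yields one for $S$, so nothing is amplified. Moreover, your threshold was derived for a fixed $S$ whose eigenbasis sits at angle $\pi/4$ to that of $Y$, with $y\to\infty$; the $y$-dependent matrix $Y^kSY^k$ does not keep that geometry.

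The threshold comes from your choice $\alpha/\beta=a/m$, not from the angle $\pi/4$ or from letting $y\to\infty$. Put $x=\alpha/\beta$ and $\tilde g(r)=x^2-2(m/a)xr+r^2$, so that $g(0),g(1),g(2)$ equal $a\beta^2$ times $\tilde g(1),\tilde g(y),\tilde g(y^2)$. Then
\[
\tilde g(y)^2-\tilde g(1)\,\tilde g(y^2)=\bigl(1-\tilde g(1)\bigr)y^4+O(y^3)\qquad(y\to\infty),
\]
so a violation occurs for all large $y$ as soon as $\tilde g(1)<1$, i.e., $0<x<2m/a$. Your $x=a/m$ gives $\tilde g(1)=\delta$, hence the spurious condition $\delta<1$. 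The choice $x=m/a$ gives $\tilde g(1)=1-m^2/a^2<1$ for every $s>1$.

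This is precisely the paper's mechanism. As $y\to\infty$, $Y^2/y^2$ collapses onto $P_v$ with $v=e_2$ (the paper's $X_\varepsilon=\varepsilon I+\mu P_v$). The test then reduces to $|\langle u,v\rangle|^2\langle v,Sv\rangle\le\langle u,Su\rangle$, which fails for $u$ nearly parallel to the non-eigenvector $v$ and tilted to the correct side. Your $u$, with $\alpha/\beta=a/m>1$, lies closer to $e_1$ and far from $v$, which is why it missed. With $x=m/a$ and one fixed large $y$ (so $Y\in\mathbb P_2$ and only $A$ needs the $\varepsilon I$ regularization), your argument would be complete and somewhat cleaner than the paper's.
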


\begin{proof}
	Based on (\ref{eq:spec-def}) we can conclude that \eqref{eq:specGM} is equivalent to
	\begin{equation}\label{eq:star}
		\Tr\!\big(S\,X^{1/2}AX^{1/2}\big)\ \le\ 
		\sqrt{\,\Tr(SA)\,\Tr(S\,XAX)},\qquad \mbox{for all~} A,X\in \mathbb{P}_n.
	\end{equation}
	If $S$ is not a scalar, we can perform a unitary change of basis and focus on the $2\times2$ corner
	to assume $S=\diag(s,1)$ with $s\neq1$.
	Let us fix the following unit vectors:
	\[
	u=\binom{\cos\theta}{\sin\theta},\quad
	v=\binom{\cos(\theta+\Delta)}{\sin(\theta+\Delta)},
	\]
	and parameters $\lambda,\mu>0$.   For $\varepsilon>0$ set
\[
A_\varepsilon:=\varepsilon I+\lambda\,|u\rangle \langle u|,\quad
X_\varepsilon:=\varepsilon I+\mu\,|v\rangle \langle v|,\quad \mbox{and} \quad
B_\varepsilon:=X_\varepsilon A_\varepsilon X_\varepsilon .
\]
Then $A_\varepsilon,B_\varepsilon\in \mathbb{P}_n$ and, in view of \eqref{eq:spec-def},
\[
A_\varepsilon\natural B_\varepsilon
=
X_\varepsilon^{1/2}A_\varepsilon X_\varepsilon^{1/2}.
\]
Let $P_u:=|u\rangle\langle u|$ and $P_v:=|v\rangle\langle v|$. Since
\[
X_\varepsilon
=
(\varepsilon+\mu)P_v+\varepsilon(I-P_v),
\]
we have
\[
X_\varepsilon^{1/2}
=
(\varepsilon+\mu)^{1/2}P_v+\varepsilon^{1/2}(I-P_v)
=
\mu^{1/2}P_v+O(\varepsilon^{1/2})
\]
in operator norm, which means that the norm of the error is bounded by a constant times $\varepsilon^{1/2}$. Moreover,
\[
A_\varepsilon=\lambda P_u+O(\varepsilon).
\]
Hence
\[
X_\varepsilon^{1/2}A_\varepsilon X_\varepsilon^{1/2}
=
\lambda\mu P_vP_uP_v+O(\varepsilon^{1/2})
=
\lambda\mu|\langle u,v\rangle|^2P_v+O(\varepsilon^{1/2}).
\]
Therefore, as $\varepsilon\downarrow0$,
\begin{align*}
\Tr\!\big(S\,X_\varepsilon^{1/2}A_\varepsilon X_\varepsilon^{1/2}\big)
&=
\lambda\mu\,|\langle u,v\rangle|^{2}\,\langle v,Sv\rangle
+O(\varepsilon^{1/2}),\\
\Tr(SA_\varepsilon)
&=
\lambda\,\langle u,Su\rangle+O(\varepsilon),\\
\Tr\!\big(S\,X_\varepsilon A_\varepsilon X_\varepsilon\big)
&=
\lambda\mu^{2}\,|\langle u,v\rangle|^{2}\,\langle v,Sv\rangle
+O(\varepsilon).
\end{align*} 
	Plugging these into \eqref{eq:star}, canceling the common positive factor
	 $\lambda \mu|\langle u,v\rangle|$ and letting $\varepsilon\to0$ we obtain the following inequality
	\begin{equation}\label{eq:core}
		|\langle u,v\rangle|\ \le\ \sqrt{\frac{\langle u,Su\rangle}{\langle v,Sv\rangle}}
		\qquad (\text{for all unit }u,v).
	\end{equation}
	Put
	\[
	f(\theta):=\langle u,Su\rangle
	= 1+(s-1)\cos^2\theta.
	\]
	The inequality \eqref{eq:core} becomes 
	\begin{equation}\label{eq:angle}
		|\cos\Delta| \le\ \sqrt{\frac{f(\theta)}{f(\theta+\Delta)}},
	\end{equation}
	or equivalently,
	\[
	\cos^{2}\Delta\; f(\theta+\Delta) \ \le\ f(\theta).
	\]
	Expanding for small $\Delta$ as
	\[
	\cos^{2}\Delta \;=\; 1-\Delta^{2}+O(\Delta^{4}),\qquad
	f(\theta+\Delta)=f(\theta)+f'(\theta)\Delta+O(\Delta^{2}),
	\]
	with $f'(\theta)=-(s-1)\sin(2\theta)$, we get 
	\[
	\cos^{2}\Delta\; f(\theta+\Delta)
	= f(\theta)\ +\ f'(\theta)\Delta\ -\ f(\theta)\Delta^{2}\ +\ O(\Delta^{2}).
	\]
	If $s\neq1$, choose $\theta$ so that $\sin(2\theta)\neq0$, and take the sign of $\Delta$
	so that $f'(\theta)\Delta>0$. For sufficiently small $|\Delta|>0$ the linear term then dominates
	the $O(\Delta^{2})$ terms and we obtain
	\[
	\cos^{2}\Delta\; f(\theta+\Delta)\ >\ f(\theta),
	\]
	contradicting \eqref{eq:core}. Thus no such $s\neq1$ can exist, and in every $2\times2$ corner
	$S$ must be a scalar; hence $S=cI$. 
\end{proof}
 
 \begin{theorem}
	A positive linear functional $\varphi$ on $\mathbb{M}_n$ satisfies
	\begin{equation}\label{eq:specGM2}
		\varphi\big((A \natural B)^2\big)
		\le
		\sqrt{\varphi(A^2)\,\varphi(B^2)}
	\end{equation}
	for all $A,B\in \mathbb{P}_n$ if and only if
	$\varphi$ is a nonnegative scalar multiple of the trace.
\end{theorem}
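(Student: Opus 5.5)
The plan is to prove the two directions separately. Sufficiency reduces to a Cauchy--Schwarz estimate once one knows a spectral property of $A\natural B$. Necessity repeats the nearly-parallel pure-state construction used in the proof of the previous theorem.

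\emph{Sufficiency.} Let $\varphi=c\Tr$ with $c\ge 0$; it suffices to treat $c=1$. Write $X=A^{-1}\#B$, so that $B=XAX$ and $A\natural B=X^{1/2}AX^{1/2}$ by \eqref{eq:spec-def}. Then
\[
(A\natural B)^2=X^{1/2}AXAX^{1/2}=X^{1/2}\,(AXAX)\,X^{-1/2}.
\]
Hence $(A\natural B)^2$ is similar to $AXAX=A(XAX)=AB$. This is the Fiedler--Pt\'ak fact that $(A\natural B)^2$ and $AB$ have the same spectrum. It follows that $\Tr((A\natural B)^2)=\Tr(AB)=\langle A,B\rangle_{HS}$. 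Cauchy--Schwarz in the Hilbert--Schmidt inner product then gives $\Tr(AB)\le\sqrt{\Tr(A^2)\Tr(B^2)}$, which is \eqref{eq:specGM2}.

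\emph{Necessity.} Write $\varphi=\Tr(S\,\cdot)$. Suppose $S$ is not scalar, and as before pass to a $2\times 2$ corner with $S=\diag(s,1)$, $s\neq 1$. Take the same $u,v$, $\lambda,\mu>0$, $A_\varepsilon=\varepsilon I+\lambda P_u$, $X_\varepsilon=\varepsilon I+\mu P_v$, and $B_\varepsilon=X_\varepsilon A_\varepsilon X_\varepsilon$. Set $c=|\langle u,v\rangle|$. From the previous proof, $A_\varepsilon\natural B_\varepsilon=\lambda\mu c^2P_v+O(\varepsilon^{1/2})$ and $B_\varepsilon=\lambda\mu^2c^2P_v+O(\varepsilon)$. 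Squaring these and $A_\varepsilon=\lambda P_u+O(\varepsilon)$ (the error terms remain small after squaring, since everything is bounded), we get as $\varepsilon\downarrow 0$:
\[
\varphi\big((A_\varepsilon\natural B_\varepsilon)^2\big)\to \lambda^2\mu^2c^4 f(\theta+\Delta),\quad
\varphi(A_\varepsilon^2)\to\lambda^2 f(\theta),\quad
\varphi(B_\varepsilon^2)\to \lambda^2\mu^4c^4 f(\theta+\Delta).
\]
Here $f(\theta)=\langle u,Su\rangle=1+(s-1)\cos^2\theta$. Passing to the limit in \eqref{eq:specGM2} and cancelling $\lambda^2\mu^2c^2\sqrt{f(\theta+\Delta)}$ yields
\[
\cos^4\Delta\, f(\theta+\Delta)\le f(\theta)
\]
for all $\theta$ and small $\Delta$.

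Expanding gives $\cos^4\Delta\,f(\theta+\Delta)=f(\theta)+f'(\theta)\Delta+O(\Delta^2)$, where $f'(\theta)=-(s-1)\sin 2\theta$. Choose $\theta$ with $\sin2\theta\ne0$ and the sign of $\Delta$ so that $f'(\theta)\Delta>0$. For small $|\Delta|$ the linear term dominates, which contradicts the inequality just obtained. Hence $S=cI$, and positivity of $\varphi$ gives $c\ge 0$.

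The main obstacle is the similarity $(A\natural B)^2\sim AB$ in the sufficiency direction. Once it is established via the representation $B=XAX$, the remaining steps are routine. In the necessity direction, the only care needed is to check that squaring preserves the vanishing of the error terms, so that the limit $\varepsilon\downarrow0$ is legitimate.
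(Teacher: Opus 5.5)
Your proof is correct, but it follows a different route from the paper in both directions.

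For sufficiency, the paper quotes the Fiedler--Pt\'ak spectral result. It then bounds $\Tr(AB)$ by von Neumann's trace inequality, followed by Cauchy--Schwarz on the eigenvalue sequences. You get the spectral fact directly from the similarity $(A\natural B)^2=X^{1/2}(AB)X^{-1/2}$. You then apply Cauchy--Schwarz in the Hilbert--Schmidt inner product, so von Neumann's inequality is not needed.

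For necessity, the paper uses an explicit $2\times 2$ family $A$, $X$, both tending to the projection onto $(1,1)^\top/\sqrt2$ as $\varepsilon\to0$. It normalizes $S=\diag(\frac12+s,\frac12-s)$ and expands every quantity to first order in $\varepsilon$, arriving at $\frac{s}{2}\varepsilon\le o(\varepsilon)$. You instead reuse the two-stage limit from the preceding theorem. First you let $\varepsilon\downarrow0$, which gives the rank-one inequality $|\langle u,v\rangle|^4\langle v,Sv\rangle\le\langle u,Su\rangle$. This is the counterpart of $|\langle u,v\rangle|^2\langle v,Sv\rangle\le\langle u,Su\rangle$ obtained there. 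Then you take the angle $\Delta$ as the small parameter.

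Your version is shorter and avoids error-prone matrix algebra. It also shows that a single nearly-parallel pure-state witness handles both theorems. The paper's version has the advantage of producing concrete positive definite counterexamples through a single limit.

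One small detail: here $\varphi$ is only positive, so $S$ may be singular. You should normalize the chosen $2\times2$ corner by dividing by its positive eigenvalue, giving $S=\diag(s,1)$ with $s\in[0,\infty)\setminus\{1\}$. Then $f(\theta)=s\cos^2\theta+\sin^2\theta>0$ for $\theta\in(0,\pi/2)$, and your argument goes through unchanged.
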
 
\begin{proof}
	It follows from \cite[Theorem 3.2(8)]{FiedlerPtak1997} that the eigenvalues of $A \natural B$ are the positive square roots of the eigenvalues of 
	$AB$. It follows from spectral theorem that
	\begin{align*}
		\Tr\big((A \natural B)^2\big)&=\sum_{j=1}^n\lambda_j\big((A \natural B)^2\big)=\sum_{j=1}^n\lambda_j^2(A \natural B)=\sum_{j=1}^n\lambda_j(AB)\\
		&=\Tr(AB) \leq \sum_{j=1}^n\lambda_j(A)\lambda_j(B)\\
		&\qquad \qquad \text{(by~the~von~Neumann~trace~inequality~ \cite[Theorem~2.1.25]{BKMS})}\\
		&\leq \left(\sum_{j=1}^n\lambda_j(A)^2\right)^{1/2} \left(\sum_{j=1}^n\lambda_j(B)^2\right)^{1/2}\\
		&=\sqrt{\Tr(A^2)\,\Tr(B^2)}\,.
	\end{align*} 
    Therefore, if $\varphi=c\Tr$ with $c\ge0$, then
\[
\varphi\big((A\natural B)^2\big)
=
c\,\Tr\big((A\natural B)^2\big)
\le
c\sqrt{\Tr(A^2)\Tr(B^2)}
=
\sqrt{\varphi(A^2)\varphi(B^2)}.
\]
Thus every nonnegative scalar multiple of the trace satisfies
\eqref{eq:specGM2}. Conversely, let $\varphi(X)=\Tr(SX)$ be a positive linear functional on
$\mathbb{M}_n$ satisfying \eqref{eq:specGM2}, where $S\ge0$. If $S=0$,
then $\varphi=0$ is a nonnegative scalar multiple of the trace. Assume
therefore that $S\ne0$.

It is enough to show that the restriction of $S$ to every two-dimensional
subspace is scalar. Suppose, to the contrary, that $S$ has two
distinct eigenvalues. Restricting to the corresponding two-dimensional
subspace and normalizing the restricted functional, we may assume that
$n=2$, $\Tr(S)=1$, and
\[
S=\operatorname{diag}\left(\frac12+s,\frac12-s\right),
\qquad 0<s\le \frac12 .
\]
We shall derive a contradiction.

Consider the following matrices, where $\varepsilon>0$ is a real number:
	$$
	A = \frac12 \begin{pmatrix} 1 & 1 \\ 1 & 1+\varepsilon \end{pmatrix} \quad \mbox{and}\quad X = \frac14\begin{pmatrix} 2+2\varepsilon+\varepsilon^2 & 2+\varepsilon \\ 2+\varepsilon & 2 \end{pmatrix}.
	$$
	Then, 
	$$
	X^{1/2} = \frac12\begin{pmatrix} 1+\varepsilon & 1 \\ 1 & 1 \end{pmatrix}
	$$
	and
	$$
	XAX = \frac{1}{32}\begin{pmatrix} 16+28\varepsilon+21\varepsilon^2+7\varepsilon^3+\varepsilon^4 & 16+20\varepsilon+9\varepsilon^2+\varepsilon^3 \\ 16+20\varepsilon+9\varepsilon^2+\varepsilon^3 & 16+12\varepsilon+\varepsilon^2 \end{pmatrix}.
	$$
	Set $B:=XAX$. We have 
	$$
	A\natural B=X^{1/2}AX^{1/2}=\frac{1}{8}\begin{pmatrix} 4+5\varepsilon+\varepsilon^2 & 4+3\varepsilon \\ 4+3\varepsilon & 4+\varepsilon \end{pmatrix}
	,
	$$
	$$
	(A\natural B)^2=\frac{1}{64}\begin{pmatrix} 32+64\varepsilon+42\varepsilon^2+10\varepsilon^3+\varepsilon^4 & 
		32+48\varepsilon+22\varepsilon^2+3\varepsilon^3 
		\\ 32+48\varepsilon+22\varepsilon^2+3\varepsilon^3 & 32+32\varepsilon+10 \varepsilon^2 \end{pmatrix}.
	$$ 
	Therefore, 
	\begin{equation}
		\begin{split}
			\varphi ( (A\natural B)^2) & =
			\frac{1}{64}(32+64\varepsilon+42\varepsilon^2+10\varepsilon^3+\varepsilon^4)\left(\frac12 +s\right)
			+ \frac{1}{64}(32+32\varepsilon+10\varepsilon^2)\left(\frac12 -s\right)
			\\ & = \frac12 +\frac34 \varepsilon +\frac{s}{2}\varepsilon +o(\varepsilon) \quad \mbox{as}\quad \varepsilon\to 0^+.
		\end{split}
		\notag
	\end{equation}
	Since 
	$$
	B= \frac{1}{32}\begin{pmatrix} 16+28\varepsilon+o(\varepsilon) & 16+20\varepsilon+o(\varepsilon) \\ 16+20\varepsilon+o(\varepsilon) & 16+12\varepsilon+o(\varepsilon) \end{pmatrix}
	\quad \mbox{as}\quad \varepsilon\to 0^+,
	$$
	we obtain 
	$$
	B^2= \frac{1}{2}\begin{pmatrix} 1+3\varepsilon+o(\varepsilon) & 1+\frac52 \varepsilon+o(\varepsilon) \\ 1+\frac52 \varepsilon+o(\varepsilon) & 1+2\varepsilon+o(\varepsilon) \end{pmatrix}
	\quad \mbox{as}\quad \varepsilon\to 0^+,
	$$
	and hence, 
	\begin{equation}
		\begin{split}
			\varphi ( B^2) & =
			\frac{1}{2}\left( (1+3\varepsilon)\left(\frac12 +s\right)
			+ (1+2\varepsilon)\left(\frac12 -s\right)\right) +o(\varepsilon)
			\\ & = \frac12 +\frac54 \varepsilon +\frac{s}{2}\varepsilon +o(\varepsilon) \quad \mbox{as}\quad \varepsilon\to 0^+.
		\end{split}
		\notag
	\end{equation}
	Since
	$$
	A^{2} = \frac14\begin{pmatrix} 2 & 2+\varepsilon \\ 2+\varepsilon & 2+2\varepsilon+\varepsilon^2 \end{pmatrix},
	$$
	we have
	$$
	\varphi ( A^2) =
	\frac{1}{2}+\frac{\varepsilon}{4}- \frac{s}{2}\varepsilon +o(\varepsilon) \quad \mbox{as}\quad \varepsilon\to 0^+.
	$$
	By squaring both sides of inequality \eqref{eq:specGM2}, we arrive at
	$$
	\left( \frac12 +\frac34 \varepsilon +\frac{s}{2}\varepsilon +o(\varepsilon) \right)^2\leq
	\left(\frac{1}{2}+\frac{\varepsilon}{4}- \frac{s}{2}\varepsilon +o(\varepsilon)\right)
	\left(\frac{1}{2}+\frac{5}{4}\varepsilon+ \frac{s}{2}\varepsilon +o(\varepsilon)\right) 
	\quad \mbox{as}\quad \varepsilon\to 0^+,
	$$
	that is, 
	$$
	\frac14 +\frac34 \varepsilon +\frac{s}{2}\varepsilon \leq \frac14 +\frac34 \varepsilon +o(\varepsilon)
	\quad \mbox{as}\quad \varepsilon\to 0^+.
	$$
	Thus,
	$$
	\frac{s}{2}\varepsilon \leq o(\varepsilon) \quad \mbox{as}\quad \varepsilon\to 0^+.
	$$
	Dividing both sides of this inequality by the positive number $\varepsilon$ and then passing to the limit as
	$ \varepsilon\to 0^+$ we reach $s\leq 0$.   This contradicts $s>0$. Hence no two distinct eigenvalues of $S$ can
exist. Therefore $S=cI$ for some $c\ge0$, and so
$\varphi=c\Tr$. 
	
\end{proof}

 \begin{remark}
When $B=I$ and $\varphi$ is a state, inequality (\ref{eq:specGM2})
formally reduces to the Kadison inequality
\[
\varphi(A)^2 \le \varphi(A^2);
\]
see \cite[Theorem 5.2.6]{BKMS} and \cite{KMN}. However, (\ref{eq:specGM2}) should not be regarded as a direct extension of
Kadison's inequality. Indeed, the latter holds for every state on $M_n$,
whereas (\ref{eq:specGM2}), when required to hold for all $A,B\in\mathbb{P}_n$,
forces $\varphi$ to be a scalar multiple of the trace. Thus, (\ref{eq:specGM2}) is better viewed as a two-variable strengthening
of Kadison's inequality that characterizes the trace.
\end{remark}

Although for any positive linear map, the inequality $$\varphi(A\# B) \leq \varphi \left(\frac{A+B}{2}\right)$$ follows from the matrix arithmetic-geometric mean inequality $A\# B \leq (A+B)/2$ for any $A, B \succ 0$, however, $\varphi(A\natural B) \leq \varphi (\frac{A+B}{2})$ does not hold for a positive linear functional on $\mathbb{M}_n$, in general. A counterexample is as follows: Consider the matrices 
\[
A = \begin{pmatrix} 4 & 0 \\ 0 & 1 \end{pmatrix} \quad 
B = \begin{pmatrix} 1 & 0 \\ 0 & 1 \end{pmatrix}, \quad
\]
for which $A \natural B =\begin{pmatrix} 2 & 0 \\ 0 & 1 \end{pmatrix}$ and the positive linear functional $\varphi(X)=\Tr(SX)$, where $S = \begin{pmatrix} 1 & 0 \\ 0 & 2 \end{pmatrix}$. However, note that 
$$
\Tr(A\natural B) = \Tr((A^{1/2}BA^{1/2})^{1/2}) \leq \Tr\left (\frac{A+B}{2}\right). 
$$

\begin{theorem} 
	Let $\varphi(X)=\Tr(SX)$ be a positive linear functional on $\mathbb M_n$ with
	$S\in \mathbb{P}_n$. 
	If
	\begin{equation}\label{eq:main}
		\varphi(A\natural B)\ \le\ \varphi\Big(\frac{A+B}{2}\Big),\qquad \text{for all~}A,B\in \mathbb{P}_n,
	\end{equation}
	then $S$ is a scalar multiple of the identity. 
\end{theorem}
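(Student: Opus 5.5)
Following the pattern of the preceding theorems, we argue by contradiction. Suppose $S$ is not scalar. After a unitary change of basis we restrict to a two-dimensional corner and normalize, so that $S=\diag(s,1)$ with $s\neq 1$. We then test \eqref{eq:main} on commuting pairs. For commuting $A,B$ the spectral geometric mean coincides with the metric one, so $A\natural B=(AB)^{1/2}$ (Fiedler--Pt\'ak). Take diagonal matrices (embedded in $\mathbb{P}_n$ by adding $I$ on the orthogonal complement):
\[
A=\diag(a_1,a_2),\qquad B=\diag(b_1,b_2).
\]
Then \eqref{eq:main} reads
\[
s\sqrt{a_1b_1}+\sqrt{a_2b_2}\le s\,\frac{a_1+b_1}{2}+\frac{a_2+b_2}{2}.
\]
Each bracket $\frac{a_i+b_i}{2}-\sqrt{a_ib_i}$ is nonnegative, so commuting pairs give no contradiction. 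The counterexample in the preceding remark (where $A\natural B$ is not the square root of $AB$ in the noncommuting sense) shows that we must instead use noncommuting pairs.

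The plan is therefore to use the rank-one witness from the proof of the first theorem of this subsection. Set $A_\varepsilon=\varepsilon I+\lambda P_u$, $X_\varepsilon=\varepsilon I+\mu P_v$, and $B_\varepsilon=X_\varepsilon A_\varepsilon X_\varepsilon$, so that $A_\varepsilon\natural B_\varepsilon=X_\varepsilon^{1/2}A_\varepsilon X_\varepsilon^{1/2}$ by \eqref{eq:spec-def}. The asymptotics computed there, as $\varepsilon\downarrow 0$, give
\[
\varphi(A_\varepsilon\natural B_\varepsilon)\to \lambda\mu c^2 f_v,\qquad \varphi(A_\varepsilon)\to\lambda f_u,\qquad \varphi(B_\varepsilon)\to \lambda\mu^2c^2 f_v,
\]
where $c=|\langle u,v\rangle|$, $f_u=\langle u,Su\rangle$ and $f_v=\langle v,Sv\rangle$. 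After dividing by $\lambda$, inequality \eqref{eq:main} in the limit requires
\[
2\mu c^2 f_v\le f_u+\mu^2c^2f_v \qquad\text{for all }\mu>0.
\]
We optimize in $\mu$: the quadratic $\mu^2c^2f_v-2\mu c^2f_v+f_u$ attains its minimum at $\mu=1$, so the condition becomes
\[
c^2 f_v\le f_u,
\]
which is exactly \eqref{eq:core}. Here is the key point. If the limiting inequality is strict, i.e.\ $c^2f_v>f_u$, then with $\mu=1$ we get a strict violation, and by continuity in $\varepsilon$ it persists for some $\varepsilon>0$, giving $A,B\in\mathbb{P}_n$ that violate \eqref{eq:main}.

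It then remains to produce unit vectors $u,v$ with $\cos^2\Delta\, f(\theta+\Delta)>f(\theta)$, where $f(\theta)=1+(s-1)\cos^2\theta$. This is the first-order expansion already carried out in the proof of the first theorem of this subsection. Choose $\theta$ with $\sin 2\theta\neq0$ and the sign of $\Delta$ so that $f'(\theta)\Delta>0$. For small $|\Delta|$ the linear term dominates the $O(\Delta^2)$ terms, which gives the required strict inequality. Hence $s=1$ in every $2\times 2$ corner, and therefore $S=cI$.

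The main obstacle is the bookkeeping of the $\varepsilon$-asymptotics. The error in $X_\varepsilon^{1/2}$ is only $O(\varepsilon^{1/2})$, and we must check that it does not spoil the strict limiting inequality. This is harmless, since only the limit $\varepsilon\to0$ matters and all three quantities converge. A more substantive check is that the optimization really does reduce the AM-type inequality to \eqref{eq:core}, so that the choice $\mu=1$ is legitimate. Since all quantities converge to finite positive limits, the strict violation indeed survives for small $\varepsilon$.
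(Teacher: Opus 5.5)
Your proof is correct. Up to the rank-one inequality $|\langle u,v\rangle|^2\langle v,Sv\rangle\le\langle u,Su\rangle$ it follows the paper's proof exactly:
\begin{itemize}
\item the same witnesses $A_\varepsilon=\varepsilon I+\lambda P_u$, $X_\varepsilon=\varepsilon I+\mu P_v$, $B_\varepsilon=X_\varepsilon A_\varepsilon X_\varepsilon$;
\item the same limits as $\varepsilon\downarrow0$;
\item the same minimization at $\mu=1$, giving \eqref{eq:key}.
\end{itemize}
You phrase it as a strict violation at $\mu=1$ that survives for small $\varepsilon>0$ by continuity, whereas the paper passes the assumed inequality to the limit. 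That is only a change of logical packaging.

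The real difference is the last step. You observe that \eqref{eq:key} is literally \eqref{eq:core} squared. So the Taylor expansion from the proof of Theorem 3.2 already gives a strict violation: take $\sin 2\theta\neq0$ and a small rotation $\Delta$ with $f'(\theta)\Delta>0$. Nothing new has to be proved. The paper does not use this coincidence and gives a separate, non-perturbative argument instead. For fixed $v=z$, \eqref{eq:key} for all $u$ is equivalent to $D-\beta zz^*\ge0$, hence to $\beta z^*D^{-1}z\le1$. Cauchy--Schwarz, with the explicit value $\beta z^*D^{-1}z=1+\frac{(a-b)^2}{ab}\cos^2\phi\sin^2\phi$, shows this quantity exceeds $1$ whenever $z$ is not an eigenvector.

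What each route buys:
\begin{itemize}
\item Yours is shorter. It also makes explicit that Theorem 3.2 and this theorem rest on the same rank-one inequality.
\item The paper's is self-contained and exact rather than asymptotic. It is also stronger: every non-eigenvector $v$ admits a violating $u$, with a quantitative margin.
\end{itemize}

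One correction to your motivational aside. The example before the theorem uses commuting diagonal matrices, so $A\natural B=(AB)^{1/2}$ there exactly. By your own computation it cannot violate the inequality: indeed $\varphi(A\natural B)=4\le 9/2=\varphi\big(\frac{A+B}{2}\big)$. It therefore says nothing about noncommuting pairs. Drop that sentence; it plays no role in your proof.
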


\begin{proof} 
	Assume \eqref{eq:main} holds for all $A,B\in \mathbb{P}_n$. We show that $S$ must be a multiple of $I$.
	Let us fix unit vectors $u,v\in\C^n$ and $\lambda,\mu>0$, and consider the rank-one
	regularizations
	\[
	A_\varepsilon:=\varepsilon I+\lambda\,|u\rangle\!\langle u|,\quad
	X_\varepsilon:=\varepsilon I+\mu\,|v\rangle\!\langle v|,\quad \mbox{and} \quad
	B_\varepsilon:=X_\varepsilon A_\varepsilon X_\varepsilon,
	\]
	and the rank-one projections $P_u:=|u\rangle\langle u|$ and $P_v:=|v\rangle\langle v|$. Clearly $A_\varepsilon,B_\varepsilon\in \mathbb{P}_n$.    As shown in the proof of Theorem 3.2, we have 
\[
A_\varepsilon\natural B_\varepsilon
=
X_\varepsilon^{1/2}A_\varepsilon X_\varepsilon^{1/2}
=
\lambda\mu|\langle u,v\rangle|^2P_v+O(\varepsilon^{1/2}).
\]
Also,
\[
A_\varepsilon=\lambda P_u+O(\varepsilon),
\]
and
\[
B_\varepsilon=X_\varepsilon A_\varepsilon X_\varepsilon
=
\lambda\mu^2P_vP_uP_v+O(\varepsilon)
=
\lambda\mu^2|\langle u,v\rangle|^2P_v+O(\varepsilon).
\]
Therefore, as $\varepsilon\downarrow0$,
\[
\Tr\!\big(S(A_\varepsilon\natural B_\varepsilon)\big)
=
\lambda\mu\,\alpha\,\beta+O(\varepsilon^{1/2}),
\]
\[
\Tr(SA_\varepsilon)
=
\lambda\,\gamma+O(\varepsilon),
\]
and
\[
\Tr(SB_\varepsilon)
=
\lambda\mu^{2}\,\alpha\,\beta+O(\varepsilon),
\]
where $\alpha:=|\langle u,v\rangle|^{2}\in[0,1],$ $\beta:=\langle v,Sv\rangle,$ $
\gamma:=\langle u,Su\rangle.$ 
	Applying \eqref{eq:main} to $(A_\varepsilon,B_\varepsilon)$ and letting
	$\varepsilon\downarrow0$ yields that
	\[
	\mu\,\alpha\,\beta\ \le\ \frac12\big(\gamma+\mu^{2}\alpha\beta\big)
	\]
	for all $\mu>0$.  Equivalently,
\[
0\le \frac12\gamma+\frac12\mu^2\alpha\beta-\mu\alpha\beta,
\qquad \mu>0.
\]
  Minimizing  the right-hand side with
respect to $\mu>0$, whose minimum is attained at $\mu=1$, we obtain
\[
0\le \frac12\gamma-\frac12\alpha\beta.
\]
Hence
\begin{equation}\label{eq:key}
		{\ \gamma\ \ge\ \alpha\,\beta\ }\qquad\text{for all unit vectors }u,v.
	\end{equation} 
	
	We claim that \eqref{eq:key} can hold for all $u,v$ only if $S$ is a scalar.
	Indeed, if we restrict to any $2$–dimensional subspace on which (in some orthonormal basis)
	$
	S|_{\mathrm{span}\{e_1,e_2\}}=\diag(a,b)
	$
	with $a\neq b$; we will derive a contradiction.

	Let $z:=v=(\cos\phi,\sin\phi)$ where $\phi\in(0,\pi/2)$ (so $v$ is not an
	eigenvector of $S$ on this plane). Then $\beta=z^\ast Dz$ with $D:=\diag(a,b)$.
	Let $x:=u$ and observe that \eqref{eq:key} on this plane is
	 
\[ x^*Dx \ge |z^*x|^2(z^*Dz), \quad \text{for all unit }x.     	
\]     Since $|z^*x|^2=x^*zz^*x$, this is equivalent to   
	\[
	x^\ast\!\big(D-\beta\,zz^\ast\big)x\ \ge\ 0,\qquad\text{for all unit }x.
	\]
	This ensures that the $2\times2$ matrix $M:=D-\beta zz^\ast$ is positive semidefinite.  Indeed, since $D>0$, we can factor
\[
M=D-\beta zz^*
=
D^{1/2}\left(I-\beta D^{-1/2}zz^*D^{-1/2}\right)D^{1/2}.
\]
Hence $M \ge 0$ if and only if
\[
I-\beta D^{-1/2}zz^*D^{-1/2}\ge0.
\]
Since $D^{-1/2}zz^*D^{-1/2}$ is a rank-one positive semidefinite matrix,
which is equivalent to
\[
\beta\,\|D^{-1/2}z\|^2\le1,
\]
that is,
\[
\beta z^*D^{-1}z\le1.
\]  
	 Here $\beta=z^*Dz$. Since $D>0$, we have
\[
1=z^*z=\langle D^{1/2}z,D^{-1/2}z\rangle .
\]
Hence, by the ordinary Cauchy--Schwarz inequality,
\[
1
\le
\|D^{1/2}z\|^2\,\|D^{-1/2}z\|^2
=
(z^*Dz)(z^*D^{-1}z).
\]
Moreover, equality holds if and only if $D^{1/2}z$ and $D^{-1/2}z$
are linearly dependent. This is equivalent to $Dz$ being a scalar
multiple of $z$, that is, $z$ is an eigenvector of $D$.  Here $\beta=z^*Dz$. Since $D=\operatorname{diag}(a,b)$ and
$z=(\cos\phi,\sin\phi)$ with $\phi\in(0,\pi/2)$, we have
\[
\beta=a\cos^2\phi+b\sin^2\phi
\]
and
\[
z^*D^{-1}z=\frac{\cos^2\phi}{a}+\frac{\sin^2\phi}{b}.
\]
Therefore
\[
\begin{aligned}
\beta z^*D^{-1}z
&=(a\cos^2\phi+b\sin^2\phi)
\left(\frac{\cos^2\phi}{a}+\frac{\sin^2\phi}{b}\right)\\
&=\cos^4\phi+\sin^4\phi
+\left(\frac{a}{b}+\frac{b}{a}\right)\cos^2\phi\sin^2\phi\\
&=1+\left(\frac{a}{b}+\frac{b}{a}-2\right)\cos^2\phi\sin^2\phi\\
&=1+\frac{(a-b)^2}{ab}\cos^2\phi\sin^2\phi.
\end{aligned}
\]
Since $a\ne b$ and $\phi\in(0,\pi/2)$, the last term is strictly positive.
Thus
\[
\beta z^*D^{-1}z>1.
\] Therefore the necessary condition for $M\ge0$ fails. Hence $M$ is not
positive semidefinite, and so there exists a unit vector  $x=u$ with $x^\ast Mx<0$, that is,\ $\gamma-\alpha\beta<0$,
	contradicting \eqref{eq:key}.
	
	We conclude that on every $2$–dimensional subspace, the restriction of $S$
	is a scalar multiple of the identity. Therefore, $S=cI$ on $\C^n$.
\end{proof} 

\begin{remark} 
	The scalar $\alpha=|\langle u,v\rangle|^{2}$ is the \emph{fidelity} of the two pure states
	$|u\rangle,|v\rangle$, while $\gamma=\Pr_{S}(u)$ and $\beta=\Pr_{S}(v)$ are Born probabilities in
	state $S$. The necessary and sufficient rank-one consequence \eqref{eq:key} states that
	\[
	 {\Pr}_{S}(u)\ \ge\ F(u,v)\,{\Pr}_{S}(v)\quad\text{for all pure states }u,v.
	\]
	No non-maximally mixed state can satisfy this for all directions; the only states that do are the maximally mixed ones $S = cI$.
\end{remark}

\section{Quantum fidelity and trace characterizations}

For density operators $\rho,\sigma\geq 0$ with $\Tr\rho=\Tr\sigma=1$, we can define
\[
\mathcal{F}(\rho,\sigma)\ :=\ \Tr\!\sqrt{\sqrt{\rho}\,\sigma\,\sqrt{\rho}},
\qquad
F(\rho,\sigma)\ :=\ \bigl(\mathcal{F}(\rho,\sigma)\bigr)^2.
\]
We call $\mathcal{F}$ the \emph{fidelity amplitude} and $F$ the \emph{(Uhlmann) fidelity}.

The relationship between quantum overlap $\Tr(\rho\,\sigma)$ and quantum fidelity is as follows: For all states $\rho$ and $\sigma$,
\begin{equation}\label{eq:overlap-fid}
	\Tr(\rho\,\sigma)\ \le\ F(\rho,\sigma).
\end{equation}

In this section, we discuss when the inequality (\ref{eq:overlap-fid}) could characterize the trace.

\begin{proposition}
	Let $S \ge 0$ be a density matrix on $\mathbb C^n$ (that is,\ $\Tr S=1$) with $n\ge2$. Then, there is no such $S$ for which
	\begin{equation}\label{eq:quad-square}
		\Tr(SY^{2})\ \le\ \bigl(\Tr(SY)\bigr)^{2}\qquad\text{for all }Y\in \mathbb{P}_n .
	\end{equation}
	Equivalently, inequality \eqref{eq:quad-square} cannot hold uniformly over all $Y\in \mathbb{P}_n$
	for any density $S$ in a dimension greater than or equal $2$.
\end{proposition}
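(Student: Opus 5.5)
The plan is to show that \eqref{eq:quad-square} already fails for a very simple family of test matrices, namely small perturbations of a rank-one projection. The guiding principle is Kadison's inequality from the remark above: for a state, $(\Tr(SY))^2 \le \Tr(SY^2)$ always holds. So \eqref{eq:quad-square} can only hold with equality, and it suffices to exhibit one $Y\in\mathbb{P}_n$ for which Kadison's inequality is strict.

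First, a direct construction. Fix a unit vector $w$ with $t:=\langle w,Sw\rangle\in(0,1)$. Such a $w$ exists because $n\ge 2$ and $\Tr S=1$: if $S$ has an eigenvector $e$ with eigenvalue in $(0,1)$, take $w=e$. Otherwise every eigenvalue of $S$ lies in $\{0,1\}$, and the trace condition forces $S=|e_1\rangle\langle e_1|$ for some unit $e_1$. In that case take $w=(e_1+e_2)/\sqrt2$ with $e_2\perp e_1$, which gives $t=1/2$.

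Set $Y_\varepsilon=\varepsilon I+|w\rangle\langle w|\in\mathbb{P}_n$. Since $|w\rangle\langle w|$ is a projection,
\[
Y_\varepsilon^2=\varepsilon^2 I+(1+2\varepsilon)|w\rangle\langle w| .
\]
This gives $\Tr(SY_\varepsilon)=\varepsilon+t$ and $\Tr(SY_\varepsilon^2)=\varepsilon^2+(1+2\varepsilon)t$. At $\varepsilon=0$ the two sides of \eqref{eq:quad-square} are $t$ and $t^2$, and $t>t^2$ because $0<t<1$. By continuity in $\varepsilon$, the strict reverse inequality $\Tr(SY_\varepsilon^2)>(\Tr(SY_\varepsilon))^2$ persists for all small $\varepsilon>0$. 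This is exactly the rank-one regularization idea used in Section 3.

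An alternative, cleaner route is to observe that
\[
\Tr(SY^2)-(\Tr(SY))^2=\Tr\bigl(S(Y-\Tr(SY)I)^2\bigr)\ge 0,
\]
which is a variance. Equality requires $S^{1/2}(Y-cI)=0$ with $c:=\Tr(SY)$. Choose $Y$ diagonal in an eigenbasis of $S$ with distinct positive entries. Then $Y$ restricted to the range of $S$ is a scalar only if that range is one-dimensional. In the rank-one case, instead take $Y=I+|w\rangle\langle w|$ with $w$ as above.

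The only delicate point is guaranteeing $t\in(0,1)$ when $S$ is a pure state. This is precisely where the hypothesis $n\ge2$ enters, through the off-diagonal choice of $w$.
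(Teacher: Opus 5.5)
Your proof is correct and follows essentially the paper's route. Both arguments use a rank-one projection test that splits into two cases: $S$ has an eigenvalue in $(0,1)$, or $S=|e_1\rangle\langle e_1|$ is pure, in which case the off-diagonal vector $w=(e_1+e_2)/\sqrt2$ is used. The paper's matrix $\begin{pmatrix}1&t\\ t&1\end{pmatrix}\oplus 0_{n-2}$ at $t=1$ is exactly $2|w\rangle\langle w|$, and both proofs finish with an $\varepsilon I$ regularization. Your version is slightly tighter: your own formulas give $\Tr(SY_\varepsilon^2)-(\Tr(SY_\varepsilon))^2=t-t^2$ exactly for every $\varepsilon>0$, because the variance you identify is translation invariant when $\Tr S=1$, so the continuity step is not needed.
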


\begin{proof}
	First, choose a unit vector $w$ and take $Y=t\,P_w$ where $P_w=|w\rangle\!\langle w|$ and $t>0$.
	Then, we have $\Tr(SY)=t\,\langle w,Sw\rangle$ and $\Tr(SY^{2})=t^{2}\langle w,Sw\rangle$.
	Thus \eqref{eq:quad-square} yields that
	$$
	t^{2}\,\langle w,Sw\rangle\ \le\ t^{2}\,\langle w,Sw\rangle^{2}\qquad \text{or} \qquad \langle w,Sw\rangle\ \le\ \langle w,Sw\rangle^{2}.$$
	This implies that for every unit vector $w$, either $\langle w,Sw\rangle=0$ or $\langle w,Sw\rangle\ge1$.
	In particular, the eigenvalues of $S$ must be in the set $\{0\}\cup[1,\infty)$.
	Since $\Tr S=1$, the only possibility (in any dimension) is that $S$ is a rank-one projection.

	Assume $S=|v\rangle\!\langle v|$ and choose an orthonormal basis with $v=e_1$.
	Let
	\[
	Y=\begin{pmatrix}1&t\\[2pt] t&1\end{pmatrix}\oplus 0_{n-2}\qquad(0<t\le1),
	\]
	which is positive semidefinite (its $2\times2$ principal block has eigenvalues $1\pm t\ge0$).
	Then
	\[
	\Tr(SY)\,=\,\langle e_1,Ye_1\rangle\,=\,1,\qquad
	\Tr(SY^{2})\,=\,\langle e_1,Y^{2}e_1\rangle\,=\,1+t^{2}.
	\]
	Therefore, \eqref{eq:quad-square} would require $1+t^{2}\leq 1$, which is impossible for $t>0$.
  This means that even for a pure state $S$, the inequality fails for some
positive semidefinite $Y$.  
	  Strictly speaking, the above choices of $Y$ are positive semidefinite.
Replacing them by $Y+\delta I$ and letting $\delta\downarrow0$, the same
contradiction follows by continuity. 
	Thus, no density matrix $S$ (with $n\ge2$) can satisfy \eqref{eq:quad-square}
	for all $Y\in \mathbb{P}_n$.
\end{proof}

  \begin{remark}
The inequality
\[
\Tr(SY^2)\le \bigl(\Tr(SY)\bigr)^2,\qquad Y\in\mathbb P_n,
\]
is not homogeneous in $S$. Indeed, replacing $S$ by $cS$ multiplies
the left-hand side by $c$, while the right-hand side is multiplied by
$c^2$. Thus, without a normalization on $S$, this inequality is not
suitable as a trace-characterizing condition.

For instance, in $\mathbb M_2$, if $S=\operatorname{diag}(s,1)$ with
$s\ge1$, then
\[
\Tr(SY^2)\le \bigl(\Tr(SY)\bigr)^2,\qquad Y\in\mathbb P_2.
\]
Thus such estimates may hold for non-normalized non-scalar $S$, for example
for many $S \ge I$. On the other hand, the density constraint
$\Tr S=1$ makes \eqref{eq:quad-square} too strong to hold universally in
dimension $n\ge2$. This shows that a normalization condition on $S$ is
essential if one wants to use this type of inequality to characterize the
trace.
\end{remark} 

However, we have the following characterization.

\begin{proposition}
	Let $\varphi(X)=\Tr(SX)$ with $S\in \mathbb{P}_n$ and $\Tr S=n\ge2$. Then
	\[
	\Tr(SY^{2})\ \le\ \bigl(\Tr(SY)\bigr)^{2}\qquad\text{for all }Y\in \mathbb{P}_n
	\]
	holds if and only if $S=I$ (equivalently, $\varphi=\Tr$).
\end{proposition}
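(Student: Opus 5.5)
The plan is to prove the two implications separately. The first uses the elementary scalar inequality $\sum\lambda_j^2\le(\sum\lambda_j)^2$. The second uses the same rank-one projection test as in the proof of the preceding proposition, now combined with the normalization $\Tr S=n$.

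\emph{Sufficiency.} Let $S=I$ and let $Y\in\mathbb P_n$ have eigenvalues $\lambda_1,\dots,\lambda_n>0$. Then
\[
\Tr(Y^2)=\sum_{j}\lambda_j^2\le\Bigl(\sum_j\lambda_j\Bigr)^2=(\Tr Y)^2,
\]
because the cross terms $2\lambda_i\lambda_j$ in the expansion of the square are nonnegative. This is the required inequality for $\varphi=\Tr$.

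\emph{Necessity.} Assume the inequality holds for all $Y\in\mathbb P_n$. Fix a unit vector $w$, write $P_w=|w\rangle\langle w|$, and test with $Y_\delta:=P_w+\delta I\in\mathbb P_n$ for $\delta>0$. Both sides are polynomial in $\delta$, so letting $\delta\downarrow0$ gives
\[
\Tr(SP_w^2)\le(\Tr(SP_w))^2,
\]
that is, $\langle w,Sw\rangle\le\langle w,Sw\rangle^2$. Since $S\in\mathbb P_n$, we have $\langle w,Sw\rangle>0$, and dividing yields $\langle w,Sw\rangle\ge1$ for every unit vector $w$. Taking $w$ to be an eigenvector of $S$ shows that every eigenvalue of $S$ is at least $1$. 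The eigenvalues sum to $\Tr S=n$, so each of them equals $1$, and hence $S=I$, i.e.\ $\varphi=\Tr$.

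There is no substantial obstacle. The only point needing care is that the rank-one test matrix $P_w$ is merely positive semidefinite, which the regularization $Y_\delta=P_w+\delta I$ together with continuity handles, exactly as in the previous proposition. The role of the normalization $\Tr S=n$, as opposed to $\Tr S=1$ in the previous proposition, is precisely that it allows eigenvalues $\ge1$ only in the form $S=I$.
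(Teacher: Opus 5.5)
Your proof is correct and takes essentially the same route as the paper: test the inequality on the rank-one projection onto an eigenvector of $S$, which forces every eigenvalue to be at least $1$, and then use $\Tr S=n$ to conclude $S=I$. You are slightly more complete than the paper in two respects: it applies the test to $Y=|w\rangle\langle w|\notin\mathbb P_n$ without regularizing, and it never verifies the ``if'' direction $\Tr(Y^2)\le(\Tr Y)^2$.
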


\begin{proof}
	Suppose the inequality holds for all $Y\in \mathbb{P}_n$.
	Let $\lambda_{\min}$ be the smallest eigenvalue of $S$.
	Since $\Tr S=n$, either $S=I$ or $\lambda_{\min}<1$.
	If $\lambda_{\min}<1$, choose a unit eigenvector $w$ with $Sw=\lambda_{\min}w$
	and set $Y:=|w\rangle\langle w|$ (so $Y^2=Y$). Then
	\[
	\Tr(SY^2)=\Tr(SY)=\langle w,Sw\rangle=\lambda_{\min}\in(0,1),
	\]
	which contradicts $\Tr(SY^2)\le(\Tr(SY))^2$ (it would say $x\leq x^2$ with $0<x<1$).
	Hence $\lambda_{\min}\ge1$, and since $\Tr S=n$, all eigenvalues equal $1$:
	$S=I$.
\end{proof}

\medskip
\noindent \textit{Author Contributions Statement.} All authors wrote the main manuscript text and they edited and reviewed the manuscript.

\medskip
\noindent \textit{Conflict of Interest Statement.} On behalf of the authors, the corresponding author states that there is no conflict of interest. 

\medskip
\noindent\textit{Data Availability Statement.} Data sharing is not applicable to this article as no datasets were generated or analysed during the current study.

\medskip
\noindent \textit{Funding Declaration.} { {This research is funded by Vietnam National University Ho Chi Minh City
(VNU-HCM) under grant number B2026-34-06.}}


\begin{thebibliography}{99}
	
	\bibitem{Audenaert2007} K. M. R. Audenaert, J. Calsamiglia, R. Mu\~{n}oz-Tapia, E. Bagan, L. Masanes,
	A. Acín, and F. Verstraete, \textit{Discriminating states: the quantum Chernoff bound}, Phys. Rev. Lett. \textbf{98} (2007), 160501.
	
	\bibitem{BhatiaJainLim2019} R. Bhatia, T. Jain, and Y. Lim, \textit{On the Bures--Wasserstein distance between positive definite matrices}, Expo. Math. \textbf{37} (2019), 165--191.
	
	\bibitem{BKMS} A. M. Bikchentaev, F. Kittaneh, M. S. Moslehian, and Y. Seo, \textit{Trace Inequalities for Matrices and Hilbert Space Operators}, Forum for Interdisciplinary Mathematics, Springer, 2024.
	
	\bibitem{BM1} A. M. Bikchentaev and M. S. Moslehian, \textit{Characterizations of tracial functionals on $C^*$-algebras}, Infin. Dimens. Anal. Quantum Probab. Relat. Top. (2025) 2550003, 14 p.
	
	\bibitem{BM3} A. M. Bikchentaev and M. S. Moslehian, \textit{Characterizations of tracial functionals on matrix and operator algebras}, Bull. Belg. Math. Soc. Simon Stevin \textbf{32} (2025), no. 3, 343--361. 
	
	\bibitem{BM2} A. M. Bikchentaev and M. S. Moslehian, \textit{Characterizations of the canonical trace on full matrix algebras}, J. Math. Anal. Appl. \textbf{552} (2025), 129764.
	
	\bibitem{DinhDuNguyenVuong2023} T. H. Dinh, H. B. T. Du, A. N. D. Nguyen, and T. D. Vuong, \textit{On new quantum divergences}, Linear Multilinear Algebra. \textbf{71} (2023), 1--15.
	
	\bibitem{Dinh2013b} T. H. Dinh, M. T. Ho, and H. Osaka, \textit{A generalization of Powers-Størmer’s inequality }, Linear Algebra Appl. \textbf{439} (2013), 3035--3042.
	
	\bibitem{DinhLeOsakaPhan2025} T. H. Dinh, A. V. Le, H. Osaka, and N. Y. Phan, \textit{New quantum divergences generated by monotonicity inequality}, Math. Inequal. Appl. \textbf{28} (2025), 143--157.
	
	\bibitem{DinhTikhonov2010b} T. H. Dinh and O. E. Tikhonov, \textit{To the theory of operator monotone and operator convex functions}, Russian Math. \textbf{54} (2010), 7--11.
	
	\bibitem{FiedlerPtak1997} M. Fiedler and V. Pt\'{a}k, \textit{A new positive definite geometric mean of two positive definite matrices}, Linear Algebra Appl. \textbf{251} (1997), 1--20.
	
	\bibitem{Hayashi2017} M. Hayashi, \textit{Quantum Information Theory}, Springer, 2nd edition, 2017. 
	
	\bibitem{Jeong2024} M. Jeong, S. Kim, and T.-Y. Tam, \textit{New weighted spectral geometric mean and quantum divergence}, Linear Algebra. Appl. \textbf{726} (2025), 164--179.
	
	\bibitem{Jozsa1994} R. Jozsa, \textit{Fidelity for mixed quantum states}, J. Modern Opt. \textbf{41} (1994), 2315--2323.
	
	\bibitem{KMN} M. Kian, M. S. Moslehian, and R. Nakamoto, \textit{Asymmetric Choi--Davis inequalities}, Linear Multilinear Algebra. \textbf{70} (2022), no. 17, 3287--3300. 
	
	\bibitem{KuboAndo1980} F. Kubo and T. Ando, \textit{Means of positive linear operators}, Math. Ann. \textbf{246} (1980), 205--224.
	
	\bibitem{MO} M. S. Moslehian and H. Osaka, \textit{Advanced Techniques with Block Matrices of Operators}, Frontiers in Mathematics, Birkh\"{a}user, 2024.
	
	\bibitem{spehner1}	D. Spehner and M. Orszag, \textit{Geometric quantum discord with Bures distance}, New J. Phys. \textbf{15} (2013), no. 10, Article ID 103001, 18 p.
	
	\bibitem{Uhlmann1976} A. Uhlmann, \textit{The ``transition probability'' in the state space of a $*$-algebra}, Rep. Math. Phys. \textbf{9} (1976), 273--279.
	
\end{thebibliography}
\end{document}